\documentclass[a4paper]{article}
\usepackage[utf8]{inputenc}
\usepackage{thm-restate}
\usepackage[margin=2cm]{geometry}
\usepackage{authblk}

\usepackage[T1]{fontenc}
\usepackage{graphicx} 
\usepackage[ruled,noend]{algorithm2e}
\usepackage{algpseudocode}
\usepackage{amsthm}
\usepackage{amssymb,amsmath}  
\usepackage{mathtools}
\usepackage[colorlinks=true, allcolors=blue!80!black, linktocpage]{hyperref}
\usepackage{bookmark}
\usepackage[capitalize]{cleveref}
\usepackage{todonotes}
\usepackage{placeins}
\usepackage{booktabs}
\usepackage{microtype}

\usepackage{comment}

\usepackage{newtxtext}
\usepackage{newtxmath}

\usepackage{enumitem}
\setlist[1]{labelindent=\parindent}
\setlist[enumerate]{label=(\arabic*)}
\setlist[itemize]{noitemsep}

\usepackage{xcolor}

\newcommand{\setsize}[1]{\left|#1\right|}

\newcommand{\N}{\mathbb{N}}

\newcommand{\prefdeg}{\texttt{prefdeg}}
\newcommand{\suffdeg}{\texttt{suffdeg}}
\newcommand{\len}{\texttt{len}}

\definecolor{CWIBlue}{rgb}{0.38, 0.478, 0.56}
\definecolor{CWIDGreen}{rgb}{0.372, 0.494, 0.419}
\definecolor{CWILGreen}{rgb}{0.603, 0.717, 0.486}
\definecolor{CWIRed}{rgb}{0.686, 0.188, 0.262}

\newif\ifhidetodos
\hidetodostrue

\usepackage{marginnote}

\newcommand{\cbox}[2][yellow]{%
  \fcolorbox{#1}{white}{\parbox{\dimexpr\linewidth-2\fboxsep}{\strut #2\strut}}%
}

\newcommand{\customtodo}[3]{\textcolor{#2}{\(\blacktriangledown\)}\marginnote{\raggedright \textcolor{#2}{\textbf{#1:} #3}}}
\newcommand{\customtododisplay}[3]{\noindent\cbox[#2]{\textcolor{#2}{\textbf{#1:} #3}}}
\newcommand{\customtodoinline}[3]{\textcolor{#2}{\textcolor{#2}{\textbf{#1:} #3}}}

\ifhidetodos
    \renewcommand{\customtodo}[3]{}
    \renewcommand{\customtododisplay}[3]{}
    \renewcommand{\customtodoinline}[3]{}
\fi

\newtheorem{theorem}{Theorem}[section]
\newtheorem{corollary}[theorem]{Corollary}
\newtheorem{lemma}[theorem]{Lemma}
\newtheorem{definition}[theorem]{Definition}

\newcommand{\exopt}{\mathsf{exopt}}

\title{Greedy Algorithms for Shortcut Sets and Hopsets}
\author[1,2]{Ben Bals}
\author[1]{Joakim Blikstad}
\author[3]{Greg Bodwin}
\author[1]{Daniel Dadush}
\author[4]{Sebastian Forster}
\author[1,2]{Yasamin Nazari}
\affil[1]{CWI, Amsterdam, The Netherlands}
\affil[2]{Vrije Universiteit, Amsterdam, The Netherlands}
\affil[3]{University of Michigan}
\affil[4]{University of Salzburg}

\usepackage{thm-restate}
\usepackage{placeins}
\usepackage{bm}
\usepackage{tabularx}
\usepackage{makecell}

\usepackage{appendix}
\usepackage[appendix=inline,bibliography=common]{apxproof}

\usepackage{tikz}
\usetikzlibrary{decorations.pathmorphing}
\newlength{\LETTERheight}
\AtBeginDocument{\settoheight{\LETTERheight}{I}}

\crefname{enumi}{condition}{conditions}
\crefname{algocf}{algorithm}{algorithms}

\ifdefined\N\else\DeclareMathOperator{\N}{\mathbb{N}}\fi
\ifdefined\Z\else\DeclareMathOperator{\Z}{\mathbb{Z}}\fi
\ifdefined\R\else\DeclareMathOperator{\R}{\mathbb{R}}\fi

\DeclareMathOperator{\diam}{diam}
\newcommand{\BigO}{O}
\newcommand{\Oish}{\widetilde{\BigO}}
\newcommand{\Omegaish}{\widetilde{\Omega}}
\newcommand{\Thetaish}{\widetilde{\Theta}}

\newcommand{\chains}{\mathcal{C}}

\DeclareMathOperator{\tc}{TC}
\DeclareMathOperator{\supershortcut}{SuperShort}

\newcommand{\hopbound}{\beta}

\newcommand{\hopdist}{\texttt{hopdist}}
\newcommand{\dist}{\texttt{dist}}
\newcommand{\eps}{\varepsilon}

\newcommand{\tail}[1]{\mathrm{tail}(e)}

\newcommand{\Althofer}{Alth\"{o}fer}

\usepackage{caption}
\usepackage{subcaption}
\usepackage{mathtools}

\usepackage{booktabs}

\newcommand{\ie}[1]{(i.e.,~#1)}

\usepackage{enumitem}
\setlist[itemize]{noitemsep, nolistsep}
\setlist[enumerate,1]{label=(\arabic*)}

\newtheoremrep{appendixtheorem}[theorem]{Theorem}
\newtheoremrep{appendixlemma}[theorem]{Lemma}
\newtheoremrep{appendixcorollary}[theorem]{Corollary}

\newif\ifpolishversion

\usepackage{environ}

\ifpolishversion

  \RenewEnviron{proof}
   {\begin{OldProof} This fact is well known within competitive programming. \end{OldProof}}
\fi

\renewcommand{\subparagraph}[1]{\paragraph{#1}}



\makeatletter
\@fleqnfalse 
\makeatother

\date{April 24, 2026}

\begin{document}

\maketitle
\begin{abstract}
For many popular \emph{graph metric sparsifiers}, such as spanners, emulators, and preservers, simple and elegant greedy algorithms are known that achieve state-of-the-art or existentially optimal tradeoffs between size and quality.
The goal of this paper is to develop and analyze comparable greedy algorithms for nearby objects in \emph{graph metric augmentation}.
We show the following:
\begin{itemize}
\item A simple greedy algorithm for shortcut sets achieves the state-of-the-art size/hopbound tradeoff recently proved by Kogan and Parter (2022), up to $O(\log n)$ factors in the size.
Moreover, with an additional preprocessing step, the greedy algorithm subpolynomially \emph{improves} on the previous size bounds in some range of parameters.

\item The same greedy algorithm was already known to be existentially optimal for the size/hopbound tradeoff for hopsets, by an analysis of Berman, Raskhodnikova, and Ruan (2010) introduced for transitive-closure spanners.
We provide a completely different analysis showing that the algorithm is also existentially optimal (up to $O(\log n)$ factors) for the \emph{matching} hopset problem, in which one has a budget of roughly $O(m)$ additional edges (for an $m$-edge input graph).
\end{itemize}
\end{abstract}

\pagenumbering{gobble}
\clearpage
\tableofcontents
\clearpage

\pagenumbering{arabic}

\section{Introduction}

We study \emph{shortcut sets} and \emph{hopsets}.
These are graph-theoretic primitives which have enjoyed a recent surge of popularity due to their applications in algorithms for computing reachability, distances, or shortest paths.
Some examples include parallel algorithms \cite{UY91, cohen2000, elkin2019RNC, cao_efficient_2020, cao2020improved, cao2023exact,fineman2018nearly,JLS19}, dynamic algorithms \cite{gutenberg2020, henzinger2014, bernstein2021deterministic, LN2022}, and distributed algorithms \cite{elkin2019journal, elkin2017, censor2021}, which in turn have had applications to certain static algorithms, e.g.\ for computing flows \cite{bernstein2021deterministic, chen2022maximum}.
Formally:

\begin{definition} [Hopdistance]
The \textbf{hopdistance} between two nodes $(s, t)$ in a (possibly weighted) graph $G$ is the minimum number of edges in any $s \leadsto t$ shortest path.
\end{definition}

\begin{definition} [Shortcuts]
For an unweighted graph $G$, a \textbf{shortcut set} is a set of additional unweighted edges $H$ in the transitive closure\footnote{Recall: The transitive closure is the graph $G^*$ over the same vertex set as $G$, which has each edge $(u, v)$ iff there is a $u \leadsto v$ path in $G$.} of $G$.
The \textbf{hopbound} of $H$ is the max hopdistance in $G \cup H$ over all reachable pairs $(s, t)$.
\end{definition}

\begin{definition} [Hopsets]
For a weighted graph $G$, a \textbf{hopset} is a set of additional weighted edges $H$ in the distance closure\footnote{Recall: the distance closure is the same as the transitive closure, but each edge $(u, v) \in E(G^*)$ is assigned weight $w(u, v) := \dist_G(u, v)$.} of $G$.
The \textbf{hopbound} of $H$ is the max hopdistance in $G \cup H$ over all reachable pairs $(s, t)$.
\end{definition}

\begin{figure}[t]
\begin{center}
\begin{tikzpicture}
    \foreach \x in {0,1,2,3,4}
        \foreach \y in {0,1,2,3}
            \node[draw, circle, fill=black, inner sep=2pt] (N\x\y) at (\x,\y) {};

    \foreach \x in {0,1,2,3,4}
        \foreach \y [count=\yi from 1] in {0,1,2}
            \draw[->] (N\x\y) -- (N\x\yi);

    \foreach \y in {0,1,2,3}
        \foreach \x [count=\xi from 1] in {0,1,2,3}
            \draw[->] (N\x\y) -- (N\xi\y);
            
    \draw [red, ->] (0, 2) -- (2, 3);    
    \draw [red, ->] (0, 2) to[bend left=20] (3, 2);
    
    \draw [red, ->] (1, 1) to[bend left=20] (4, 1);
    \draw [red, ->] (1, 1) -- (3, 2);
    \draw [red, ->] (1, 1) -- (2, 3);
    
    \draw [red, ->] (2, 0) -- (4, 1);
    \draw [red, ->] (2, 0) -- (3, 2);
    \draw [red, ->] (2, 0) to[bend right=20] (2, 3);

\end{tikzpicture}
\end{center}
\caption{The red edges form a shortcut set of hopbound $\beta=5$ for the underlying graph, since once they are added to the graph, every reachable pair of nodes has a path of at most $5$ edges.  If all red edges are assigned weight $3$ (the distance between their endpoints in the underlying graph), then they form a hopset of hopbound $\beta=5$.}
\end{figure}

A central goal in the area is to develop constructions of shortcut sets and hopsets with a provably favorable tradeoff between size and hopbound.
There are two popular versions of this question:
\begin{itemize}
\item In one version, the goal is to prove a favorable tradeoff among the number of nodes $n$, the shortcut/hopset size $|H|$, and the hopbound $\hopbound$.
Recent work has closed this problem for hopsets: the best possible tradeoff is
$$|H| \le O\left(\frac{n^2}{\beta^2}\right)$$
in both settings of directed and undirected graphs \cite{BH23, VXX24}.
For shortcut sets, a recent breakthrough by Kogan and Parter \cite{KP22a} showed an improved bound of
$$|H| \le \Oish\left( \frac{n^{3/2}}{\hopbound^{3/2}} + \frac{n^2}{\hopbound^3}\right)$$
for directed input graphs,
but there is still a polynomial gap to the lower bound; closing or narrowing this is a major open question.
The progression of work on this version of the problem is summarized in \Cref{tbl:priorworksizeh}.

\item In the other version, sometimes called \emph{matching shortcuts/hopsets} \cite{Thorup95}, the shortcut/hopset size budget is $|H| \le O(m)$ (or perhaps $m^{1+o(1)}$), where $m$ is the number of edges in the input graph.
The goal is to obtain the best possible tradeoff between $n$ and $\hopbound$ within this budget.
This version arises in algorithmic applications, where the shortcut/hopset is explicitly added to the input graph as a preprocessing step, and the constraint $|H| \le O(m)$ implies that this operation doesn't asymptotically increase the size of the input graph.
It is a major open problem to obtain any upper bound in this setting that beat those known for $|H| \le O(n)$.\footnote{In particular, for matching hopsets one may assume without loss of generality that the input graph is weakly connected, and so $m \ge \Omega(n)$.  Thus any bounds for the $|H| \le O(n)$ setting immediately transfer to the matching hopset problem.}
Concretely, the current bounds are $\Omega(n^{2/9}) \le \hopbound \le \Oish(n^{1/3})$ for matching shortcuts and $\Omega(n^{1/3}) \le \hopbound \le O(n^{1/2})$ for matching hopsets, with both lower bounds proved by Hoppenworth, Xu, and Xu \cite{HXX25}.\footnote{The lower bound of $\Omega(n^{1/3})$ for matching hopsets of weighted graphs is not written in their paper, but follows directly from their methods (personal communication).}
\end{itemize}

\begin{table}[th]
\begin{center}
\begin{tabular}{lll}
\toprule
\textbf{Hopset size} & \textbf{Shortcut Set size} & \textbf{Citation}\\
\midrule
$\Oish\left(n^2 \hopbound^{-2}\right)$ & $\Oish\left(n^2 \hopbound^{-2}\right)$ & \cite{UY91} (see also \cite{Fineman19, JLS19, BGJRW12})\\
$O\left(n^2 \hopbound^{-2}\right)$ & $O\left(n^2 \hopbound^{-2}\right)$ & \cite{BRR10}\\
& $\Oish\left( n^{3/2} \hopbound^{-3/2} + n^2 \hopbound^{-3}  \right)$ & \cite{KP22a}\\
\midrule
& $\Omega\left(n^{\frac{2d}{d+1}}\beta^{-d}\right)$ & \cite{HP18} (see also \cite{Hesse03, BHT23})\\
$\Omega\left( n^{2} \hopbound^{-3} \right)$ & & \cite{KP22} (see also \cite{BHT23})\\
$\Omegaish\left( n^2 \hopbound^{-2} \right)$ & $\Omegaish\left( n^{4/3}\hopbound^{-4/3} + n^{5/4} \hopbound^{-1} \right)$ & \cite{BH23}\\
& $\Omega\left(n^{1 + \frac{d-1}{d+3}} \hopbound^{-(d-1)} \right)$ & \cite{VXX24}\\
\bottomrule
\end{tabular}
\end{center}
\caption{\label{tbl:priorworksizeh} Prior work on the achievable hopbounds for shortcut sets and exact hopsets with hopbound $\hopbound$.  All shortcut set results are for the setting of directed input graphs, while all hopset results apply in both the directed and undirected settings.  In the lower bounds, $d$ may be any positive integer.  Many of these results were initially phrased by taking hopset size $|H|$ on input and investigating the attainable hopbound $\beta$; we have inverted these expressions to facilitate comparison.}
\end{table}

\subsection{Graph Metric Sparsifiers and Greedy Algorithms \label{sec:sparsaug}}

Shortcut sets and hopsets live within the research area of graph metric augmentation, where the goal is to \emph{add} a small number of edges to a graph to improve its distance or reachability properties.
A related area is that of graph metric sparsification, where the goal is to \emph{remove} as many edges as possible from a graph while approximately preserving certain distance or reachability properties.
Some commonly studied graph metric sparsifiers include spanners \cite{PU89jacm}, emulators \cite{DHZ00}, preservers \cite{CE06, AB18}, transitive reductions \cite{AGU72}, etc.
Although sparsification and augmentation seem to be opposites, the community has recently started to view them as deeply related.
They have long drawn on similar technical toolkits, and a 2020 survey by Elkin and Neiman \cite{elkin_near-additive_2020} took some steps towards formalizing the relationships between these areas, with further formal reductions between the areas discovered recently (e.g.~\cite{KP22, BHT23, KP25}).

For many graph metric sparsifiers, there is a theorem of the following flavor: some simple greedy construction will achieve either
\begin{itemize}
\item \emph{State-of-the-art}, meaning that it produces sparsifiers that provably match the currently-known size/quality tradeoff attainable by other constructions, or
\item \emph{Existential optimality}, meaning that it produces sparsifiers that provably match the \emph{best possible} size/quality tradeoff that exists.
Existential optimality is stronger, and hence preferable when it can be attained.
Interestingly, existential optimality can often be proved even when the quantitative tradeoff is not yet known.
\end{itemize}
Some of the results developing greedy algorithms of this type include \cite{ADDJS93, BDPV18, BP19, PST24, ENS14, FS16, Knudsen14, Kavitha17, BKMP10, ABDKS20, EGN21, CE06, BHT23, GHP20}; we survey these in more depth in \Cref{sec:relatedwork}.
Besides their attractive simplicity, greedy algorithms are popular because they often have simpler analyses, they are typically deterministic, they generalize well to different graph classes \cite{FS16}, and there are some experimental results suggesting that they may produce higher-quality sparsifiers on some graphs of interest \cite{Farshi14, ABSHKS21}.

One might therefore expect that objects from graph metric \emph{augmentation} would admit similar greedy algorithms.
However, only one such greedy algorithm has been discovered (and that over a decade ago).
That is \Cref{alg:greedyhop}, which is a tweaked version of an elegant algorithm introduced by Berman, Raskhodnikova, and Ruan \cite{BRR10} to construct transitive-closure spanners (these are closely related objects to shortcut sets; see \Cref{sec:relatedwork} for details).\footnote{The algorithm of Berman, Raskhodnikova, and Ruan \cite{BRR10} computes a specific edge in each round, which provably makes some amount of ``progress'' towards a correct shortcut set (or in their case, transitive-closure spanner), rather than more naively taking any edge that maximizes progress.  They also implicitly use a slightly different potential function, discussed in \Cref{sec:brr}.}
This algorithm begins with an initially-empty shortcut/hopset, and then adds the edge in each round that reduces the sum-of-hopdistances among unsatisfied node pairs as much as possible, until a shortcut/hopset of the desired hopbound is reached.
The analysis of Berman et al.~\cite{BRR10}, which we recap in \Cref{sec:brr}, implies that this algorithm produces shortcut/hopsets of hopbound $\beta$ and size $O(n^2 / \hopbound^2)$.
This bound is optimal for hopsets in their standard version, but it is open whether it is optimal for \emph{matching} hopsets.
The bound is polynomially suboptimal for both standard and matching shortcut sets (although it is still open whether the \emph{algorithm} is optimal; that is, a better analysis of the greedy algorithm could still conceivably attain an optimal bound). 

\DontPrintSemicolon
\begin{algorithm}[t]
\SetKwFor{RepTimes}{repeat}{times}{end}
\textbf{Input:} Graph $G$, desired hopbound $\hopbound$\;~\\



Let $H \gets \emptyset$\;
\While{$H$ has hopbound $>\hopbound$}{

~

    Add to $H$ the edge $e$ from the transitive/distance closure $G^*$ that maximizes the decrease in the quantity
    \begin{align*}
    \phi(H) :=\sum \limits_{\substack{(s, t) \in E(G^*) \text{ with}\\ \hopdist_{G\cup H}(s,t)> \hopbound}} \hopdist_{G \cup H}(s, t).
    \end{align*}
}

\Return{$H$}

\caption{\label{alg:greedyhop} A greedy algorithm for shortcut sets and hopsets}
\end{algorithm}

\subsection{Our Results \label{sec:ourresults}}

The contribution of this paper is to provide new and stronger analyses of \Cref{alg:greedyhop}.
We show the following results.

\begin{theorem} \label{thm:ssnbound}
Given an $n$-node DAG on input, \Cref{alg:greedyhop} constructs a shortcut set with hopbound $\hopbound$ and size
\begin{equation*}
|H| \le \Oish\left( \frac{n^{3/2}}{\hopbound^{3/2}}+ \frac{n^2}{\hopbound^3}  \right).
\end{equation*}
Moreover, with an additional standard preprocessing step (see \Cref{sec:preprocess}), the same bound applies for general directed input graphs.
Up to hidden $O(\log n)$ factors, this ties the state-of-the-art bound previously proved by Kogan and Parter \cite{KP22a} by a different algorithm.
\end{theorem}

This is proved in \Cref{sec:shortcut}.
In addition to interest in greedy algorithms, it may be interesting to note that this is the first algorithm and analysis that avoid the use of \emph{chain covers}, an ubiquitous technical step in previous shortcut set constructions and a common computational bottleneck.
We further show that a modified version of \Cref{alg:greedyhop}, which re-introduces chain covers and has an additional preprocessing phase, \emph{improves} on the bound obtained by Kogan and Parter \cite{KP22a} by subpolynomial terms.

\begin{theorem} \label{thm:greedyimprovedss}
Given an $n$-node DAG on input, \Cref{alg:chain-greedy} (which is \Cref{alg:greedyhop} with some additional preprocessing steps) constructs a shortcut set with hopbound $O(n^{1/3})$ and size $|H| \le \BigO(n\log^{*} n)$.
\end{theorem}

This result essentially converts a $O(\log^2 n)$ factor into a $O(\log^{*} n)$ factor, but it only applies for one particular point on the size/hopbound tradeoff curve.
We give more details in \Cref{sec:chain-shortcut-greedy}.
Next, we show that \Cref{alg:greedyhop} is near-existentially optimal for matching hopsets, in a sense that we next formalize (recall that it is already known that \Cref{alg:greedyhop} is existentially optimal for standard hopsets).

\begin{definition} [$\exopt$]
We write $\exopt(n, m, h)$ to denote the smallest integer $\hopbound$ such that every $(\le n)$-node, $(\le m)$-edge graph has a hopset of size $|H| \le h$ with hopbound $\le \hopbound$.\footnote{Note that the definition of $\exopt$ depends on the graph setting \ie{it may be a different function in the setting of directed input graphs vs.\ undirected input graphs}.  Since this result applies in either setting (with the respective interpretation of $\exopt$), we will not differentiate between these settings in our notation.  However, we note that $\exopt$ may be a different function still if we were to restrict to \emph{unweighted} input graphs, and this result does \emph{not} extend to the unweighted setting.}
\end{definition}

Formally, the matching hopset problem is to determine the asymptotic value of $\exopt(n, m, m)$.\footnote{Some algorithmic applications actually care about a function of the form $\exopt(n, m, m^{1+o(1)})$.  We will use $m$ in the third parameter for concreteness and simplicity, but our results extend immediately to this setting as well.}
An algorithm would be considered \emph{existentially optimal} if, for any $n$-node, $m$-edge input graph and hopbound parameter $O(\exopt(n, m, m))$, it always returned a hopset of $O(m)$ size.
We prove that this is \emph{nearly} the case for \Cref{alg:greedyhop}:

\begin{theorem} \label{thm:hopmexopt}
Given an $n$-node, $m$-edge graph and a hopbound parameter $\hopbound := 2\cdot \exopt(n, 2m, cm/ \log n)$ for a sufficiently small constant $c$, \Cref{alg:greedyhop} returns a hopset of size $|H| \le m$.
In other words, \Cref{alg:greedyhop} is existentially optimal for matching hopsets, up to $\Oish(1)$ factors in the parameters to $\exopt$.
\end{theorem}

We prove this theorem in \Cref{sec:hopsets}.
Since this theorem statement is a bit unusual, let us take a moment to clarify.
The usual statement of (near-)existential optimality would be that, given an $n$-node, $m$-edge input graph and a hopbound parameter of the form $\hopbound = \Oish(\exopt(n, m, m))$, the algorithm returns a hopset of size $|H| \le \Oish(m)$.
The key difference in Theorem \ref{thm:hopmexopt} is that, instead of losing $O(\log n)$ factors in $|H|$, we lose a $O(\log n)$ factor in the third parameter to $\exopt$ (and we also lose a factor of $2$ in the second parameter).
It is likely, although not certain, that this implies existential optimality in the usual sense.
That is: it is expected that $\exopt$ depends at worst polynomially on its parameters; for example, perhaps $\exopt(n, m, h) = \Theta(n^a m^b h^d)$ for some constants $a, b, d$.
Assuming this is the case, we have
\begin{align*}
\exopt\left(n, 2m, \frac{cm}{\log n}\right) &= \Theta\left(n^a (2m)^b \left(\frac{cm}{\log n}\right)^d\right)\\
&= \Thetaish\left(n^a m^b m^d\right)\\
&= \Thetaish\left(\exopt(n, m, h)\right).
\end{align*}
meaning that Algorithm \ref{alg:greedyhop} is near-existentially optimal for matching hopsets in the usual sense.



\subsection{Other Related Work \label{sec:relatedwork}}

As previously mentioned, many graph metric sparsifiers have simple greedy algorithms that achieve state-of-the-art or existentially optimal tradeoffs between size and quality.
The following is a brief survey of this line of work:
\begin{itemize}
\item A \emph{multiplicative $k$-spanner} of a graph $G$ is a subgraph $H$ that preserves all pairwise shortest path distances up to a $\cdot k$ factor \cite{PU89jacm}.
In a classic 1993 paper by \Althofer{}, Das, Dobkin, Joseph, and Soares \cite{ADDJS93}, the authors studied a simple incremental greedy construction algorithm, and proved that it achieves an existentially optimal tradeoff among the number of edges $|E(H)|$, the number of nodes in the input graph $n$, and the stretch $k$.
This proof technique has since been extended to show existential optimality for the analogous greedy algorithm for fault-tolerant spanners \cite{BDPV18, BP19, PST24}, light spanners \cite{ENS14}, and spanners in various graph classes \cite{FS16}.

\item An \emph{additive $k$-spanner} of a (typically unweighted) graph $G$ is a subgraph $H$ that preserves all pairwise shortest path distances up to a $+k$ factor.
Knudsen \cite{Knudsen14} adapted a previous algorithm by Baswana, Kavitha, Mehlhorn, and Pettie \cite{BKMP10} into a greedy algorithm that matches the state-of-the-art tradeoff among the number of edges $|E(H)|$, the number of nodes in the input graph $n$, and the additive error $k$.
This proof technique has since been extended to show state-of-the-art greedy algorithms for pairwise spanners \cite{Kavitha17} and for a related notion of additive spanners for weighted input graphs \cite{ABDKS20, EGN21}.

\item A \emph{distance preserver} of a graph $G$ and a set of demand pairs $P = V(G) \times V(G)$ is a subgraph $H$ that exactly preserves the distance among all pairs in $P$.
Coppersmith and Elkin \cite{CE06} proved that an incremental greedy algorithm, essentially adding a well-chosen shortest path for some demand pair in each round, matches the state-of-the-art tradeoff among size $|E(H)|$, number of nodes $n$, and number of demand pairs $|P|$.
Bodwin, Hoppenworth, and Trabelsi \cite{BHT23} proved that a different decremental greedy algorithm achieves existential optimality for distance preservers, as well as for reachability preservers (which more weakly preserve only reachability among demand pairs).
This proof technique has also been applied to achieve existential optimality for distance/reachability-preserving minors \cite{GHP20}. 
\end{itemize}

We next discuss prior work on shortcut set algorithms with an emphasis on \emph{running time}.
A common technical step in the construction of shortcut sets is the construction of a \emph{chain cover}.
It turns out that we may assume without loss of generality that the input graph is a DAG, thus the goal is then to compute a maximal set of vertex-disjoint paths in the transitive closure of length $\Omega(\hopbound)$ each.
In \cite{caceres_minimum_2023}, C{\'a}ceres gave an almost-linear time algorithm to a suitable chain cover.
In followup work \cite{caceres_maximum_2025}, C{\'a}ceres, Grigorjew, Jiamjitrak, and Tomescu developed greedy algorithms for this problem.
These algorithms imply near-linear running time for shortcut sets with size $\Oish(n)$ and suboptimal hopbound $O(\sqrt{n})$ (see also \cite{cao_efficient_2020}).
Since hopsets and shortcut sets are often applied in parallel settings, there is also considerable literature on work-efficient construction algorithms.
See e.g.~\cite{Fineman19, JLS19, BGJRW12} and references within.
The algorithms used in this setting are typically based on classical random sampling approaches, and are not greedy.
There have also been recent hardness-of-approximation results for the problem of computing the best shortcut/hopset for a particular input instance \cite{dinitz_approximation_2025, chalermsook2025shortcuts}.

Related to shortcut sets and hopsets, another commonly studied object in graph metric augmentation is an \emph{$\alpha$-approximate hopset}.
While a shortcut set guarantees that every reachable pair has \emph{some} path with $\le \beta$ hops, and a hopset more specifically guarantees a \emph{shortest} path with $\le \beta$ hops, an approximate hopset guarantees a \emph{$\alpha$-approximate shortest path} with $\le \beta$ hops.
Following bounds by Kogan and Parter \cite{KP22a}, Berenstein and Wein~\cite{BW23} obtained bounds for $(1+\eps)$-approximate hopsets in directed graphs with quasipolynomially-bounded edge weights that essentially match the known bounds for shortcut sets.
Later, Hoppenworth, Xu, and Xu \cite{HXX25} showed that when edge weights can be unbounded, the attainable bounds for $(1+\eps)$-approximate hopsets are no better than those attainable for exact hopsets.
There has also been considerable work on approximate hopsets in the setting of undirected graphs \cite{HP18, ABP17}.
Our techniques do not seem to extend straightforwardly to analyze greedily algorithms for approximate hopsets, but attaining results along these lines is an interesting open problem.

Another related object that has received attention from the community is \emph{transitive-closure spanners}; see e.g.~\cite{BGJRW12, Raskhodnikova10}.
Transitive-closure spanners are strictly stronger objects than shortcut sets; they essentially require $\dist_H(s, t) \le \hopbound$ rather than $\dist_{G \cup H}(s, t) \le \hopbound$.
This strengthening generally prevents one from proving extremal tradeoffs between $\beta$ and the size of $H$ like those available for shortcut/hopsets, because (for example) a transitive-closure spanner may not remove any edges from a directed biclique input graph.
Nonetheless, the similarity in problem statements allows some results and techniques to transfer between them \cite{BRR10}.

\section{Warmup: Simple Greedy Achieves Near-Optimal Hopsets}
\label{sec:warmup}
We will start by describing an analysis of \Cref{alg:greedyhop} for hopsets, which matches the bound achieved by the classical folklore sampling algorithm, and in some sense can be viewed as a derandomization of the algorithm.
This theorem follows from a similar analysis of \cite{BRR10}, but with minor adaptations which we will recap in this section. In particular, we adapt their algorithm and analysis to a version in which the edges are added based on a slightly different potential function. In \Cref{sec:brr}, we briefly explain the difference in their analysis.

\begin{theorem} \label{thm:opthopsets}
When we run \Cref{alg:greedyhop} for hopsets on an $n$-node input graph with hopbound parameter $\hopbound$, it returns a hopset of size $\Oish(n^2 / \hopbound^2)$.
\end{theorem}

The \emph{statement} of Theorem \ref{thm:opthopsets} is not a new contribution of this paper; rather, our goal in proving this theorem is to explain the current knowledge from prior work \cite{BRR10} in more technical depth, and also to give a warmup proof to introduce some of our main technical ideas in a simpler way.

\subsection{Potential Reduction Lemma and Proof of \Cref{thm:opthopsets}}

The quantity $\phi(H)$ written in the algorithm will be called the \emph{potential} of the partially-constructed shortcut set $H$, and we write
\begin{equation*}
\Delta(u, v \mid H) \coloneqq  \phi(H) - \phi(H \cup \{(u, v)\})
\end{equation*}
for the amount that a new edge $(u, v)$ would reduce potential when added to the current hopset $H$.
The following key lemma guarantees the existence of an edge that significantly reduces potential:
\begin{lemma} [Potential Reduction Lemma] \label{lem:opthspotential}
In each round of the algorithm, there is a pair of nodes $(u, v)$ for which
$$\Delta(u, v \mid H) \ge \phi(H) \cdot \Omega\left(\frac{\beta^2}{n^2}\right).$$
\end{lemma}
We will prove this lemma in the following subsection; for now, let us see quickly how it implies \Cref{thm:opthopsets}.
Initially, potential is $\phi(H) \le n^3$, since there are at most $n^2$ node pairs that contribute at most $n$ to potential each.
The potential reduction lemma implies that potential will reduce by a constant fraction every $n^2 / \hopbound^2$ rounds of the algorithm.
Since potential must always be a nonnegative integer, this implies that potential reaches $0$ within $O(n^2 \log n / \hopbound^2)$ rounds.
Once potential is $0$, this implies that no more node pairs have hopbound $>\hopbound$, and so the algorithm halts, having added one edge to the hopset $H$ per round and thus yielding the desired hopset size.

\paragraph{Proof of \Cref{lem:opthspotential}.}

Choose a node pair $(u, v)$ uniformly at random among the reachable pairs $(u, v)$; our strategy is to bound the expectation of $\Delta(u, v \mid H)$.
Consider any node pair $(s, t)$ with $\hopdist_{G \cup H}(s, t) > \beta$, and let $\pi(s, t)$ be a hop-minimal shortest path for this pair.
If we sample $(u, v)$ such that $u$ is one of the first third of the nodes along $\pi(s, t)$, and $v$ is one of the last third of the nodes along $\pi(s, t)$, then by adding the hopedge $(u, v)$ we will reduce $\hopdist_{G \cup H}(s, t)$ by a constant factor.
Thus, letting $G^*$ be the transitive closure of $G$, we have:
\begin{align*}
\mathbb{E}\left[\Delta(u, v \mid H) \right] &\ge \sum \limits_{\substack{(s, t) \in E(G^*) \text{ with}\\ \hopdist_{G\cup H}(s,t)> \hopbound}} \Pr\left[ u \text{ sampled in first }, v \text{ in last third of } \pi(s, t)\right] \cdot \Theta\left( \hopdist_{G \cup H}(s, t)\right)\\
&= \sum \limits_{\substack{(s, t) \in E(G^*) \text{ with}\\ \hopdist_{G\cup H}(s,t)> \hopbound}} \Theta\left( \frac{\hopdist_{G\cup H}(s, t)}{n} \right)^2 \cdot \Theta\left( \hopdist_{G \cup H}(s, t)\right)\\
&\ge \Theta\left( \frac{\hopbound}{n} \right)^2 \cdot \sum \limits_{\substack{(s, t) \in E(G^*) \text{ with}\\ \hopdist_{G\cup H}(s,t)> \hopbound}} \Theta\left( \hopdist_{G \cup H}(s, t)\right)\\
&= \Theta\left( \frac{\hopbound}{n} \right)^2 \cdot \phi(H).
\end{align*}

There exists a possible choice of $(u, v)$ that matches or exceeds this expectation, and this choice of $(u, v)$ therefore satisfies the potential reduction lemma.

As this analysis bounds the greedy potential reduction by the expected value from a uniformly random node pair, it can be seen as a formal connection between the folklore sampling algorithm and the greedy approach.

\subsection{Analysis of Berman, Raskhodnikova, and Ruan \cite{BRR10} \label{sec:brr}}

Berman, Raskhodnikova, and Ruan improved the previous bounds by removing the $\log$ factors; their approach may be explained as follows.
First, they use a tweaked version of \Cref{alg:greedyhop}, which can be seen as follows: the sum defining $\phi(H)$ is over demand pairs with hopdistance $> \hopbound/2$ (rather than $> \hopbound$).
With this change, consider a round of the algorithm, and let $p$ be the number of demand pairs with hopdistance $> \hopbound$.
We know that $p > 1$, or else the algorithm halts.
So by averaging, there exists a pair of nodes $(s, t)$ with
$$\hopdist_{G \cup H}(s, t) \ge \frac{\phi(H)}{p}.$$
Let $\pi(s, t)$ be a hop-minimal shortest path for this pair, let $u$ be the $(|\pi(s, t)|/4)$-th node from the beginning of $\pi(s, t)$, and let $v$ be the $(|\pi(s, t)|/4)$-th node from the end.
Notice that every node pair from the set
$$\bigg(\pi(s, t)[s \leadsto u] \bigg) \times \bigg(\pi(s, t)[v \leadsto t]\bigg)$$
has hopdistance $>\hopbound/2$ (and hence contributes to potential), and will have its hopdistance reduced by $\Theta(\hopdist_{G \cup H}(s, t))$ if we add $(u, v)$ as a hopedge.
Thus, $(u, v)$ will reduce potential by at least
\begin{align*}
& \Theta\left( \hopdist_{G \cup H}(s, t)\right)^3\\
\ge \ & \Theta\left(\frac{\phi(H)}{p} \cdot \hopdist_{G \cup H}(s, t)^2\right)\\
\ge \ & \phi(H) \cdot \Theta\left(\frac{\hopbound}{n}\right)^2 \tag*{since $p \le n^2$ and $\hopbound_{G \cup H}(s, t) > \hopbound$.}
\end{align*}

This leads to an alternate proof of the potential reduction lemma, which implies that the algorithm halts within $\Oish(n^2 / \hopbound^2)$ rounds, as before.
In order to remove the log factors from this bound, they point out that by similar logic, $p$ must decrease by $\Theta(\hopbound)^2$ in each round.
Since initially $p \le n^2$, this leads to a bound of $O(n^2 / \hopbound^2)$.

\section{Simple Greedy Achieves State-of-the-Art Shortcut Sets}
\label{sec:shortcut}
In this section we will prove \Cref{thm:ssnbound}, analyzing the shortcut sets returned by \Cref{alg:greedyhop}.
The claim that the output shortcut set has hopbound $\hopbound$ is immediate from the algorithm, so our remaining goal will be to show that the algorithm halts within the claimed number of rounds.
We will assume where convenient that the hopbound parameter $\beta$ is at least a sufficiently large constant, since for $\beta \le O(1)$ the claimed size bound is $|H| \le \Oish(n^2)$, which is trivial.

\subsection{Potential Reduction Lemma and Proof of \Cref{thm:ssnbound}}

As in the warmup proof, we will write $\phi(H)$ for the potential (as written in the algorithm), and $\Delta(u, v \mid H)$ for the amount adding the edge $(u, v)$ to the hopset $H$ would reduce potential.
The main workhorse is the following lemma establishing that each round of the algorithm significantly reduces the potential:

\begin{lemma}[Potential Reduction Lemma] \label{lem:sspotential}
In each round of the algorithm, for any parameter $\sigma$ that is at least a sufficiently large constant, there is a pair of nodes $(u, v)$ for which
\begin{equation*}
\Delta(u, v \mid H) \ge \phi(H) \cdot \Omega\left( \min\left\{ \frac{\hopbound^3}{\sigma n^2}, \frac{\sigma}{n} \right\}\right).
\end{equation*}
\end{lemma}

We will prove \Cref{lem:sspotential} in the following subsection; for now, let us see how it implies \Cref{thm:ssnbound}.
We calculate the balance point of the two terms in \Cref{lem:sspotential} as follows:
\begin{align*}
\frac{\sigma}{n} &= \frac{\hopbound^3}{\sigma n^2}\\
\sigma^2 &= \hopbound^3 n^{-1}\\
\sigma &= \hopbound^{3/2} n^{-1/2},
\end{align*}
and under this setting of $\sigma$, the terms balance to give
\begin{align*}
\Delta((u, v) \mid H) &\ge \Omega\left( \frac{\hopbound^{3/2}}{n^{3/2}} \right).
\end{align*}
However, this setting $\sigma=\beta^{3/2} n^{-1/2}$ is valid only when this expression is at least a sufficiently large constant.
If not, then we are in the parameter regime where $\beta \le O(n^{1/3})$.
We may then instead set $\sigma$ to be a sufficiently large constant, giving 
\begin{align*}
\Delta(u, v \mid H) &\ge \phi(H) \cdot \Omega\left( \frac{\hopbound^{3}}{n^{2}} \right)
\end{align*}
(since this term is $O(1/n)$, so the second term in the $\min$ is ignored).
So combining the two cases, potential reduces in each round by at least
\begin{align*}
\Delta(u, v \mid H) &\ge \phi(H) \cdot \Omega\left( \min\left\{ \frac{\hopbound^{3/2}}{n^{3/2}},  \frac{\hopbound^{3}}{n^{2}} \right\}\right).
\end{align*}
Thus, the potential reduces by a constant factor every
$$\max\left\{ \frac{n^{3/2}}{\hopbound^{3/2}},  \frac{n^{2}}{\hopbound^{3}} \right\}$$
rounds.
Since potential must always be a nonnegative integer, this implies that potential reaches $0$ within
$$\Oish\left(\max\left\{ \frac{n^{3/2}}{\hopbound^{3/2}},  \frac{n^{2}}{\hopbound^{3}} \right\} \right)$$
rounds, and the algorithm halts.


\subsection{Proof of \Cref{lem:sspotential}}
\label{sec:potential-reduction-lemma-proof}

We define
\begin{equation*}
A(H) \coloneqq  \left\{ (s, t) \text{ in transitive closure of } G \ \mid \ \dist_{G \cup H}(s, t) > \hopbound\right\},
\end{equation*}
as the node pairs that contribute positively to the sum defining $\phi(H)$.
For any nodes $x, y$, we denote by $\pi_{G \cup H}(x, y)$ a canonical choice of $u \leadsto v$ shortest path in $G \cup H$.
We assume that these shortest paths are selected to be \emph{consistent}: that is, if $u, v \in \pi_{G \cup H}(x, y)$ with $u$ preceding $v$, then $\pi_{G \cup H}(u, v) \subseteq \pi_{G \cup H}(x, y)$.
This is a common technical property; it can be obtained, for example, by temporarily adding some small random noise to the edge weights and choosing the weighted shortest path.
We let $\Pi_A$ be the set of all canonical shortest paths among pairs in $A(H)$.

We denote by $|\pi|$ the number of nodes in the path $\pi$, and $\len(\pi)$ the number of edges in the path $\pi$ (so $|\pi| = \len(\pi)+1$).
For a path $\pi \in \Pi_A$, we will say that its \emph{prefix} $\texttt{prefix}(\pi)$ is the subpath consisting of the first $\lfloor |\pi|/4 \rfloor$ nodes, and its \emph{suffix} $\texttt{suffix}(\pi)$ is the subpath consisting of the last $\lfloor |\pi|/4 \rfloor$ nodes.
We write $\prefdeg(v), \suffdeg(v)$ respectively for the number of prefixes and suffixes of paths in $\Pi_A$ that contain a node $v$.


\setlength{\lineskip}{0pt}
\setlength{\lineskiplimit}{0pt}
\begin{lemma} \label{lem:basepath}
In each round, there exists a canonical shortest path $\pi^*$ in $G$ with $|\pi^*| \le \frac{\hopbound}{4}$ nodes, and
$$\sum \limits_{u \in \pi^*} \suffdeg(v) \ge \Omega\left( \frac{\hopbound \cdot \phi(H)}{n} \right).$$
\end{lemma}
\begin{proof}
Sample a path $\pi \in \Pi_A$ with probability proportional to its number of nodes; that is,
$$\Pr[\text{select } \pi \in \Pi_A] \coloneqq  \frac{|\pi|}{\sum \limits_{\pi \in \Pi_A} |\pi|} = \Theta\left(\frac{|\pi|}{\phi(H)}\right).$$
Then, choose $\pi^*$ to be a contiguous subpath of $\texttt{suffix}(\pi)$, obtained as follows.
With probability $\frac{1}{2}$, pick $\pi^*$ as the first $\frac{\hopbound}{4}$ nodes of $\texttt{suffix}(\pi)$.
Otherwise, with probability $\frac{1}{2}$, pick $\pi^*$ by choosing a uniform-random node along $\texttt{suffix}(\pi)$, and then taking the following $\frac{\hopbound}{4}$ nodes (truncating the path early if the end of $\texttt{suffix}(\pi)$ is reached).
The purpose of this process is that, for a node $u \in \pi$, the probability that we have $u \in \pi^*$ is at least $\Omega(\hopbound / |\pi|)$.
This enables the following calculation:

\begin{align*}
\mathbb{E}\left[ \sum \limits_{u \in \pi^*} \suffdeg(u)\right] &\ge \sum \limits_{\pi \in \Pi_A} \Pr[\pi \text{ sampled}] \cdot \sum \limits_{u \in \texttt{suffix}(\pi)} \Pr[u \in \pi^*] \cdot \suffdeg(u)\\
&\ge \Theta\left( \sum \limits_{\pi \in \Pi_A} \frac{|\pi|}{\phi(H)} \cdot \sum \limits_{u \in \pi} \frac{\hopbound}{|\pi|} \cdot \suffdeg(u) \right)\\
&=\Theta\left( \frac{\hopbound}{\phi(H)} \cdot \sum \limits_{\pi \in \Pi_A} \sum \limits_{u \in \texttt{suffix}(\pi)} \suffdeg(u) \right)\\
&= \Theta\left( \frac{\hopbound}{\phi(H)} \cdot \sum \limits_{u \in V} \suffdeg(u)^2 \right)\\
&= \Theta\left( \frac{\hopbound}{n \cdot \phi(H)}\right) \cdot \left(\sum \limits_{u \in V} \suffdeg(u) \right)^2 \tag*{Cauchy-Schwarz}\\
&= \Theta\left( \frac{\hopbound}{n \cdot \phi(H)}\right) \cdot \left(\frac{\phi(H)}{4} \right)^2\\
&= \Theta\left( \frac{\hopbound \cdot \phi(H)}{n}\right). \tag*{\qedhere}
\end{align*}
\end{proof}

In the following, we fix $\pi^*$ as some path in $G$ satisfying the previous lemma.
Next, let $\sigma$ be a parameter that is at least a sufficiently large constant, and let us say that a path $\pi \in \Pi_A$ \emph{heavily intersects} $\pi^*$ if $\left|\texttt{suffix}(\pi) \cap \pi^*\right| \ge \sigma$.
The next two lemmas functionally split into two cases, by whether or not a constant fraction of the sum in \Cref{lem:basepath} is contributed by paths that heavily intersect $\pi^*$.
Note that \Cref{lem:basepath} implies that the premises of one lemma or the other must hold.

\begin{lemma} \label{lem:ssheavyint}
Suppose that
$$\sum \limits_{\substack{\pi \in \Pi_A,\\\pi \text{ heavily intersects } \pi^*}} \left| \texttt{suffix}(\pi) \cap \pi^* \right| \ge \Omega\left( \frac{\hopbound \cdot \phi(H)}{n} \right).$$
Then there exists a pair of nodes $(u, v)$ with $\Delta((u, v) \mid H) \ge \phi(H) \cdot \Omega\left( \frac{\sigma}{n} \right)$.
\end{lemma}
\begin{proof}
Uniformly sample a pair of nodes $(u, v)$ that are both on $\pi^*$, with $u$ exactly $\sigma/2$ steps before $v$.
Notice that, if we add $(u, v)$ to $H$, then we decrease the associated hopdistance from $u$ to $v$ from $\sigma/2$ to $1$, which is $\Theta(\sigma)$ (since we assume $\sigma$ is a sufficiently large constant).
Hence, potential will decrease by at least $\Theta(\sigma)$ times the number of paths that contain both $u$ and $v$.
Our next task is to lower bound the expected number of such paths.

For each path $\pi \in \Pi_A$ that heavily intersects $\pi^*$, we have $u, v \in \pi$ iff we choose $u \in \pi$ but $u$ is not within the last $\frac{\sigma}{2}$ nodes of $\pi$.\footnote{This step is where we use \emph{consistency} of our canonical shortest paths: since $\pi, \pi^*$ are both canonical shortest paths, their intersection is a contiguous subpath of each that contains at least $\sigma$ nodes.}
Using this, we may bound as follows.
\begin{align*}
\mathbb{E}\left[\left| \left\{ \pi \in \Pi_A \ \mid \ u, v \in \pi\right\}\right|\right] &\ge \sum \limits_{\substack{\pi \in \Pi_A,\\\pi \text{ heavily intersects } \pi^*}} \frac{|\texttt{suffix}(\pi) \cap \pi^*| - \frac{\sigma}{2}}{|\pi^*| - \frac{\sigma}{2}}\\
&\ge \sum \limits_{\substack{\pi \in \Pi_A,\\\pi \text{ heavily intersects } \pi^*}} \Omega\left(\frac{|\texttt{suffix}(\pi) \cap \pi^*|}{|\pi^*| - \frac{\sigma}{2}}\right) \tag*{since $|\texttt{suffix}(\pi) \cap \pi^*| \ge \sigma$}\\
&\ge \sum \limits_{\substack{\pi \in \Pi_A,\\\pi \text{ heavily intersects } \pi^*}} \Omega\left(\frac{|\texttt{suffix}(\pi) \cap \pi^*|}{\hopbound}\right) \tag*{since $|\pi^*| \ge |\texttt{suffix}(\pi) \cap \pi^*| \ge \sigma$}\\
&\ge \Omega\left( \frac{\phi(H)}{n}\right)
\end{align*}
where the last step follows from the hypothesis of the lemma.
Multiplying this by $\Theta(\sigma)$ as discussed previously obtains the claimed bound.
\end{proof}

\begin{lemma}
Suppose that
$$\sum \limits_{\substack{\pi \in \Pi_A,\\\pi \text{ does \textbf{not} heavily intersect } \pi^*}} \left| \texttt{suffix}(\pi) \cap \pi^* \right| \ge \Omega\left( \frac{\hopbound \cdot \phi(H)}{n} \right).$$
Then there exists a pair of nodes $(u, v)$ with $\Delta((u, v) \mid H) \ge \phi(H) \cdot \Omega\left( \frac{\beta^3}{\sigma n^2} \right)$.
\end{lemma}
\begin{proof}
Let $Q$ be the set of paths whose suffix intersects $\pi^*$, but which do not heavily intersect $\pi^*$.
Each path in $Q$ contributes $O(\sigma)$ to the sum in \Cref{lem:basepath}.
Therefore, we have
$$|Q| \ge \Omega\left(\frac{\beta \cdot \phi(H)}{\sigma n}\right).$$
Next, sample a node $u \in V$ uniformly at random, and let $Q_u \subseteq Q$ be the subset of paths from $Q$ that contain $u$ in their prefix.
Since the prefix of each path contains $\Omega(\hopbound)$ nodes, we have
\begin{align*}
\mathbb{E}[|Q_u|] &\ge |Q| \cdot \Omega\left( \frac{\hopbound}{n}\right)\\
&\ge \Omega\left( \frac{\hopbound^2 \cdot \phi(H)}{\sigma n^2}\right).
\end{align*}
Now, let $v$ be the first node along $\pi^*$ for which there exists a path in $Q_u$ with $v$ in its suffix.
Consider any path $q \in Q_u$, and let $x$ be some node in $\texttt{suffix}(q) \cap \pi^*$.
After adding the edge $(u, v)$ to the hopset, we may replace $q$ with a new path $q'$, by replacing its $u \leadsto x$ subpath with the edge $(u, v)$, followed by the $v \leadsto x$ subpath along $\pi^*$.
This reduces hopbound by:
\begin{align*}
|q| - |q'| &= \left(\len(q[u \leadsto x]) \right) - \left(1 + \len(\pi^*[v \leadsto x]) \right)\\
&\ge \left(\len(q) - \frac{\len(q)}{4} - \frac{\len(q)}{4} \right) - \left(1 + \len(\pi^*[v \leadsto x]) \right)\tag*{since $u \in \texttt{prefix}(q)$ and $x \in \texttt{suffix}(q)$}\\
&\ge \frac{\len(q)}{2} - \frac{\hopbound}{4} \tag*{since $1+\len(\pi^*) = |\pi^*| \le \frac{\beta}{4}$}\\
&= \Omega(\hopbound) \tag*{since $\len(q) \ge \beta$.}
\end{align*}
Thus, adding $(u, v)$ to $H$ reduces potential by at least $|Q_u| \cdot \Omega(\hopbound)$, giving the claimed bound.
\end{proof}



\subsection{Reducing to the Setting of DAGs \label{sec:preprocess}}

The previous analysis holds only when the input graph is a DAG.
The specific part that uses acyclicness is \Cref{lem:ssheavyint}, which relies on the assumption that the paths $\pi$ intersecting $\pi^*$ do so on a \emph{contiguous} subpath of $\pi^*$.
For a general directed graphs, it is possible that $\pi \cap \pi^*$ contains multiple disconnected subpaths of $\pi^*$, so long as $\pi$ and $\pi^*$ use those subpaths in an opposite order.

However, as previously mentioned, there is a standard preprocessing step that reduces the input graph to the setting of DAGs.
We recap it here, to keep our paper self-contained:

\begin{itemize}
\item First, the following reduction appears e.g.\ in~\cite{Raskhodnikova10}, Lemma 3.2 (but is likely folklore).
Letting $G$ be the input graph, for each strongly connected component $C$, choose a representative node $v \in V$ and add shortcut edges $\{v\} \times C \cup C \times \{v\}$.
Then contract each strongly connected component, giving a DAG $G'$.
A shortcut set on $G'$ can be straightforwardly mapped back to a correct shortcut set on $G$.

\item The previous reduction costs up to $O(n)$ edges, and thus is not directly suitable in the parameter regime where the shortcut set size budget is $|H| =: h\ll n$.
Here, we can first subsample a ``kernel'' of $h$ nodes, perform the previous reduction to spend $O(h)$ edges to reduce this kernel to a $(\le h)$-node DAG $G'$, and then map the shortcut set on $G$ back to a correct shortcut set for $G$.
More specifically, if the shortcut set for $G'$ has hopbound $\hopbound$ then the shortcut set for $G$ will be $\hopbound \cdot \Oish(n/h)$.
However, this is still enough to recover our claimed size bounds.
See e.g.~\cite{BH23}, Lemma 13, for formal analysis of this process.
\end{itemize}

Note that these are both \emph{algorithmic} reductions, and so as mentioned previously, they would need to be added to \Cref{alg:greedyhop} to achieve the claimed size bounds for general directed graphs.


\section{Simple Greedy Achieves Existentially Near-Optimal Hopsets}
\label{sec:hopsets}

In this section we will prove \Cref{thm:hopmexopt}, analyzing the hopsets returned by \Cref{alg:greedyhop}.
Again, correctness of the output hopset is immediate from the algorithm, and so we will focus on bounding its size.

\subsection{Potential Reduction Lemma and Proof of \Cref{thm:hopmexopt}}

We will analyze a run of \Cref{alg:greedyhop} with an $n$-node, $m$-edge graph $G$ and hopbound parameter $\beta \coloneqq  2\cdot \exopt(n, 2m, cm/\log n)$.
Relative to these parameter settings, we define $\phi(H)$ as the potential used in the algorithm, and $\Delta(u, v \mid H)$ as the reduction in potential when $(u, v)$ is added to the hopset $H$, as in the previous section.
The analogous potential reduction lemma for hopsets is:

\begin{lemma} [Potential Reduction Lemma] \label{lem:hspotential}
In each of the first $m$ rounds of \Cref{alg:greedyhop} for hopsets, there is a pair of nodes $(u, v)$ for which
\begin{equation*}
\Delta(u, v \mid H) \ge \phi(H) \cdot \Omega\left(\frac{\log n}{cm}\right).
\end{equation*}
\end{lemma}

We will prove this lemma in the next section; here, we quickly point out how it implies \Cref{thm:hopmexopt}.
This lemma implies that potential decreases by a constant factor every $O(cm / \log n)$ rounds.
Thus, within the first $m$ rounds, potential will decrease by a constant factor $O(\log n / c)$ times.
By choice of sufficiently small $c$, and the fact that potential must always be a nonnegative integer, this implies that potential will become $0$ within the first $m$ rounds of \Cref{alg:greedyhop}, at which point the algorithm halts.

\subsection{Proof of \Cref{lem:hspotential}}
\label{sec:hspotential-proof}

We will prove \Cref{lem:hspotential} by analyzing a different graph $G'$, derived from the input graph $G$ and the current hopset $H$.
The following claim summarizes the relevant properties of $G'$.
\begin{lemma} [New Graph $G'$] \label{lem:gprime}
In each round of the algorithm, there exists a graph $G' = (V, E \cup H, w')$ with the following properties:
\begin{itemize}
\item For all reachable pairs of nodes $(s, t)$, $G'$ has a unique shortest $s \leadsto t$ path $\pi'(s, t)$, and
\item this path $\pi'(s, t)$ is one of the shortest $s \leadsto t$ paths in $G \cup H$ that uses $\hopdist_{G \cup H}(s, t)$ edges.
\end{itemize}
\end{lemma}
\begin{proof}
We construct $G'$ as follows.
The edge set is $E \cup H$, and initially the weights are $w' \coloneqq  w$.
The first change is to fix some sufficiently small $\eps > 0$, and then add $\eps$ to every edge weight in $G'$.
This penalizes the length of a $h$-hop path by adding $\eps h$ to its length.
Thus, at this stage, for each pair $(s, t)$ the shortest $s \leadsto t$ path(s) in $G'$ are exactly the \emph{minimum-hop} shortest $s \leadsto t$ path(s) in $G \cup H$.

The second change is to fix some sufficiently smaller $0 < \delta \ll \eps$, and perturb each edge weight by an independent uniform random variable in the range $[0, \delta]$.
This breaks shortest path ties with probability $1$, and by choice of sufficiently small $\delta$, no non-shortest path can become shortest.
So there is now a unique shortest path for each demand pair, which is one of the hop-minimizing shortest paths in $G \cup H$.
This final reweighted graph is $G'$.
\end{proof}




We next prove a technical lemma.
Although the notation is a bit heavy, it essentially states the following intuitive fact: in $G'$, the reduction in sum-of-hopdistances among active pairs that can be achieved by $h$ new hopedges is at most $h$ times the reduction that can be achieved by $1$ new hopedge.
It may be more unintuitive that this does \emph{not} hold for all graphs; rather, it holds for all graphs with unique shortest paths, which is part of our reason for constructing the graph $G'$ in the first place.

 \begin{lemma}
Let $\phi', \Delta', A'$ respectively denote potential, change in potential, and active pairs (as in the previous section), with respect to the graph $G'$ and with respect to the empty hopset $H=\emptyset$.
Any hopset $H'$ for $G'$ of size $|H'|$ has hopbound at least
\begin{equation*}
\beta' \ge \frac{\phi' - |H'| \cdot \left( \max \limits_{(u, v)} \Delta'(u, v) \right)}{|A'|}.
\end{equation*}
\end{lemma}
\begin{proof}
Let $H' = \left\{ (u_1, v_1), \dots, (u_{|H'|}, v_{|H'|}) \right\}$.
For any $i$, let $H_i' \subseteq H$ denote the subset containing the first $i$ of these edges, $\left\{ (u_1, v_1), \dots, (u_i, v_i) \right\}$.
We can imagine adding the edges to $H'$, one at a time and in order, with each edge causing a reduction in the sum-of-hopdistances among pairs in $A'$.
This perspective corresponds to the identity
\begin{equation*}
\sum \limits_{(s, t) \in A'} \hopdist_{G \cup H'}(s, t)  = \phi' - \sum \limits_{i=1}^{|H'|} \sum \limits_{(s, t) \in A'} \left(\hopdist_{G \cup H'_i}(s, t) - \hopdist_{G \cup H'_{i-1}}(s, t)\right).
\end{equation*}
Next, we observe that when each edge $(u_i, v_i)$ is added to the partial hopset $H'_{i-1}$, it can possibly reduce hopdistance for an active pair $(s, t) \in A_{G'}(H'_{i-1})$ only if $u_i, v_i$ both lie on the unique shortest $s \leadsto t$ path in $G'$, and (if so) it reduces the quantity $\hopdist_{G' \cup H'_{i-1}}(s, t)$ by at most $\hopdist_{G'}(u_i, v_i) - 1$.
Meanwhile, if the same edge $(u_i, v_i)$ were added to the empty hopset $\emptyset$, then it would necessarily reduce the hopdistance for $(s, t)$ by exactly $\hopdist_{G'}(u_i, v_i) - 1$.
Thus, we may continue from our previous identity:
\begin{align*}
\sum \limits_{(s, t) \in A'} \hopdist_{G \cup H'}(s, t) &\ge \phi' - \sum \limits_{i=1}^{|H'|} \sum \limits_{(s, t) \in A'} \left( \hopdist_{G'}(u_i, v_i) - 1\right)\\
&= \phi' - \sum \limits_{i=1}^{|H'|} \sum \limits_{(s, t) \in A'} \left(\hopdist_{G \cup \{(u_i, v_i)\}}(s, t) - \hopdist_{G}(s, t)\right)\\
&\ge \phi' - \sum \limits_{i=1}^{|H'|} \Delta'(u_i, v_i)\\
&\ge \phi' - |H'| \cdot \left( \max \limits_{(u, v)} \Delta'(u, v)\right).
\end{align*}
Thus, the average hopdistance in the graph $G \cup H'$, over the pairs in $A'$, is at least
\begin{align*}
\frac{\sum \limits_{(s, t) \in A'} \hopdist_{G \cup H'}(s, t)}{|A'|} \ge \frac{\phi' - |H'| \cdot \left( \max \limits_{(u, v)} \Delta'(u, v)\right)}{|A'|}.
\end{align*}
Since there is at least one pair in $A'$ whose hopdistance in $G' \cup H'$ matches or exceeds the average, this is thus a lower bound on the hopbound of $H'$ in $G'$.
\end{proof}

The role of $G'$ in the rest of the proof is as a stepping stone to relate the potential reduction that can be achieved in $G$ to the value of the function $\exopt$.
Let $h'$ be any positive integer.
Recall that $G'$ has $n$ nodes and $m+|H|$ edges, and so by definition of $\exopt$, it has a hopset of size $|H'|=h'$ and hopbound $\exopt(n, m+|H|, h')$.
Plugging this into the bound from the previous lemma, we get
\begin{equation*}
\exopt(n, m+|H|, h') \ge \frac{\phi' - h' \cdot \left( \max \limits_{(u, v)} \Delta'(u, v) \right)}{|A'|}.
\end{equation*}

Next, let us observe the relationships between $\phi', \Delta', A'$ and the corresponding objects on $G$.
The second property of \Cref{lem:gprime} implies that $\phi' = \phi(H), A' = A(H)$, and that $\Delta'(u, v) \le \Delta(u, v \mid H)$ (since every shortest path in $G'$ is also a hop-minimal shortest path in $G$).
So we may continue
\begin{equation*}
\exopt(n, m+|H|, h') \ge \frac{\phi(H) - h' \cdot \left( \max \limits_{(u, v)} \Delta(u, v \mid H) \right)}{|A(H)|}.
\end{equation*}
Rearranging this inequality, we get
\begin{equation*}
\max \limits_{(u, v)} \Delta(u, v \mid H)  \ge \frac{\phi(H) - |A(H)|\cdot \exopt(n, m+|H|, h')}{h'}.
\end{equation*}
Note that the left-hand side of this inequality is the maximum attainable potential reduction in $G$, so our goal is now to simplify the right-hand side.
We next plug in the choice $h' \coloneqq  cm / log n$, and also observe that within the first $m$ rounds of the algorithm, we have $|H| \le m$ and therefore $\exopt(n, m+|H|, h') \le \exopt(n, 2m, h')$.
So we have
\begin{equation*}
\max \limits_{(u, v)} \Delta(u, v \mid H)  \ge \frac{\phi(H) - |A(H)|\cdot \exopt(n, 2m, \frac{cm}{\log n})}{cm / \log n}.
\end{equation*}
Next, note that $\phi(H)$ is defined as the sum of $|A(H)|$ terms, each of which is at least the hopbound parameter $2 \cdot \exopt(n, 2m, cm / \log n)$.
Thus it is at least twice the term $|A(H)| \cdot \exopt(n, 2m, cm/\log n)$ subtracted in the numerator.
So we may continue
\begin{equation*}
\max \limits_{(u, v)} \Delta(u, v \mid H)  \ge \frac{\Omega\left( \phi(H) \right)}{cm / \log n}.
\end{equation*}
This completes the proof of the potential reduction lemma.

\section{Combining Greedy and $\ell$-Covers to Obtain Slightly Smaller $n^{1/3}$-Shortcut Sets}
\label{sec:chain-shortcut-greedy}

In this section we show that a modified version of \Cref{alg:greedyhop} that is aware of an $\ell$-chain cover can be used to obtain shortcuts sets which are slightly better than the current best known bound: we achieve $\BigO(n^{1/3})$ hopbound using a $\BigO(n\log^{*} n)$ shortcut edges, compared to $\BigO(n \log^2 n)$ shortcut edges of \cite{kogan_algorithmic_2024}.

\begin{restatable}{theorem}{thmChainGreedy}
\label{thm:chain-greedy}
    \Cref{alg:chain-greedy} yields $\BigO(n^{1/3})$-shortcut sets of size $\BigO(n \log^* n)$.
\end{restatable}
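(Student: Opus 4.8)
The plan is to combine the preprocessing machinery of Kogan–Parter with the potential-decrease argument of Berman–Raskhodnikova–Ruan, but applied to a restricted family of ``important'' pairs rather than all pairs in the transitive closure. Concretely: set $\ell := n^{2/3}$ and start by computing an $\ell$-chain cover $\chains$ (\cref{def:chain-cover}); it is standard that such a cover exists and can be computed, and that $\setsize{\chains} \le \ell = n^{2/3}$. Next apply $\supershortcut(G)$ (\cref{lem:supershortcut}) to collapse the diameter \emph{within} each chain to at most $4$, at a cost of $\BigO(n \log^* n)$ edges. After this preprocessing, the only thing left to fix is communication \emph{between} chains, plus the short ``off-cover'' stretches: by \cref{def:chain-cover}, any path $P$ meets at most $2n/\ell = 2n^{1/3}$ vertices outside $V(\chains)$, so these contribute only $\BigO(n^{1/3})$ to the hopbound and can be ignored up to constants. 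Thus it suffices to shortcut, to hopbound $\BigO(n^{1/3})$, the ``chain-contracted'' reachability structure on the $\le n^{2/3}$ chains.

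The core of the argument is to run the modified greedy algorithm (\cref{alg:chain-greedy}) on a potential of the form
\begin{equation*}
\phi(H) := \sum_{\substack{(s,t)\ \text{important pair}\\ \hopdist_{G\cup H}(s,t) \ge \beta'}} \hopdist_{G\cup H}(s,t),
\end{equation*}
where $\beta' = \Theta(n^{1/3})$ and the ``important pairs'' are the $\le \setsize{\chains}^2 = n^{4/3}$ pairs $(e(u,C), e(u',C'))$ of inter-chain entry-edges — equivalently, one tracks, for each ordered pair of chains, the hopdistance between their first mutually reachable representatives, where each added shortcut edge is allowed to have one endpoint anywhere on a chain. The initial potential is at most $n^{4/3}\cdot n = \BigO(n^{7/3})$ since each such hopdistance is at most $n$; actually, since within-chain hops are already $\le 4$, any relevant path of large hopdistance must pass through many \emph{distinct} chains, so one can bound the relevant hopdistances by $\BigO(\setsize{\chains}) = \BigO(n^{2/3})$ and the initial potential by $\BigO(n^{4/3}\cdot n^{2/3}) = \BigO(n^2)$. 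Now I would run the BRR-style progress argument: while the longest relevant hopdistance is $L$, connecting a vertex on a chain near the $(L/3)$-rd position along such a witnessing path to the one near the $(2L/3)$-rd position simultaneously shortens the hopdistance of $\Omega(L)$ source-representatives to $\Omega(L)$ target-representatives by $\Omega(L)$ each, yielding a single edge that decreases $\phi$ by $\Omega(L^3)$. The greedy algorithm picks the best such edge, so it decreases at least this much. Summing the telescoping decrease from $\phi \le \BigO(n^2)$ down to $0$, and noting we stop once $L < \beta' = \Theta(n^{1/3})$ so that each step we actually perform has $L \ge n^{1/3}$, i.e.\ decreases $\phi$ by $\Omega(n)$, gives at most $\BigO(n^2)/\Omega(n) = \BigO(n)$ greedy edges. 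Adding the $\BigO(n\log^* n)$ supershortcut edges and the chain-adjacency edges (at most $\BigO(n)$ of them) yields total size $\BigO(n\log^* n)$, and the final hopbound is $\beta' + \BigO(n^{1/3}) + \BigO(1) = \BigO(n^{1/3})$ after accounting for the $\le 2n^{1/3}$ off-cover vertices and the constant within-chain hops.

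The main obstacle I anticipate is making the ``important pairs / chain-contracted hopdistance'' bookkeeping fully rigorous: one must verify that bounding hopdistance on the restricted family of inter-chain representative pairs really does control $\hopdist_{G\cup H}(s,t)$ for \emph{every} reachable pair $(s,t)$ — in particular, that an arbitrary $s\leadsto t$ path can be rerouted so that it enters a cover chain within $\BigO(n^{1/3})$ hops, travels along supershortcut edges and the greedy inter-chain shortcuts with few hops, and exits within $\BigO(n^{1/3})$ hops — and that the BRR potential-decrease step is still available after the graph has been modified by supershortcuts (i.e.\ that a witnessing long path still has $\Omega(L)$-many distinct ``source side'' and ``target side'' representatives among the important pairs, not merely $\Omega(L)$ vertices). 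A secondary subtlety is ensuring the greedy edge in each round is drawn from the transitive closure of the \emph{current} graph $G\cup H$ (which it is, since supershortcut and previously-added edges only enlarge the transitive closure, never shrink it), so legality of the chosen shortcut is never an issue. Once these structural points are nailed down, the size and hopbound bounds follow from the clean telescoping computation sketched above.
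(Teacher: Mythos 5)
Your overall strategy (chain cover, supershortcutting, then a BRR-style greedy potential over a restricted family of pairs) is the same as the paper's, but there is a genuine gap in your choice of the restricted family. You take the important pairs to be the $\BigO(n^{4/3})$ chain--chain pairs, tracking for each ordered pair of chains the hopdistance between fixed ``first mutually reachable representatives.'' This does not suffice, for exactly the reason you flag at the end but do not resolve. For the final hopbound, an arbitrary source $s$ (or the vertex $v$ where an $s\leadsto t$ path first meets a cover chain $C_v$) must be able to \emph{use} the controlled connection from $C_v$ to the target-side chain $C'$; but the designated representative of $C_v$ for the pair $(C_v,C')$ may lie earlier on $C_v$ than $v$, and in a DAG $v$ cannot move backwards along the chain, so the controlled path is unusable (and a representative later than $v$ need not reach $C'$ or $t'$ at all). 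This is precisely the asymmetry in Kogan--Parter: the prefix of a long path must be hit by a \emph{node}, only the suffix by a chain. The paper therefore takes $S=\{e(v,C)\mid v\in V,\ C\in\chains\}$, i.e.\ vertex-to-chain pairs (about $n^{5/3}$ of them), so that every vertex, on or off the cover, has its own controlled entry into every reachable chain. The same issue breaks your $\Omega(L^3)$ potential-decrease step: with one fixed representative per chain pair, the subpaths of the witnessing path need not connect the designated representatives, so the shortcut at positions $L/3$ and $2L/3$ need not decrease $\Omega(L^2)$ of the tracked distances. The paper avoids this by anchoring pairs at actual prefix \emph{vertices} and by the ``valid path'' condition (a valid path must enter each chain at the earliest node reachable from its start), which guarantees that subpaths of a $d'$-shortest valid path are themselves $d'$-shortest valid paths for the corresponding vertex--chain pairs in $S$.

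Note also that your size accounting survives only because of the (unjustified) smaller pair set: with $\BigO(n^{4/3})$ pairs you bound the initial potential by $\BigO(n^{2})$ and divide by a per-step decrease of $\Omega(n)$, but with the correct set $S$ of size $\Theta(n^{5/3})$ the initial potential is $\Theta(n^{7/3})$ and the same one-shot division only gives $\BigO(n^{4/3})$ greedy edges, which is too many. The paper instead runs the halving argument (its chain-potential lemma, \cref{lem:chain-potential}): while the maximum normalized distance lies in $[L/2,L]$ the potential is at most $2n^{5/3}L$ and each greedy edge removes $\Omega(L^3)$, so each phase costs $\BigO(n^{5/3}/L^{2})$ edges, and summing geometrically over $L$ from $n$ down to $n^{1/3}$ gives $\BigO(n)$ greedy edges, hence $\BigO(n\log^* n)$ total. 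So to repair your proof you need both the vertex-to-chain pair set (with the validity condition) and the level-by-level geometric-sum counting, not just the structural bookkeeping you list as the remaining obstacle.
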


\subsection{Background on Chain Covers \label{sec:prems}}
For a directed graph $G$ write $G^R$ for the \emph{reversed graph}, that is the graph where all edges from $G$ point in the opposite direction.
The \emph{transitive closure} $\tc(G)$ of a directed graph $G$ is the graph containing edge $(u,v)$ if and only if there is a path $u \leadsto v$ in $G$.
A \emph{chain} is a path in the transitive closure.

\subsubsection{Super-Shortcutting Paths.}
The following lemma is a slightly tighter version of Lemma~1.10 from \cite{kogan_beating_2022}.
\begin{lemma}[Lemma 1.3 of \cite{Raskhodnikova10}]
    \label{lem:path-shortcut}
    For any path graph $P_n$, there is a shortcut set of size $\BigO(n \log^* n)$ with hopbound $\hopbound \le 4$, which can be computed in $\BigO(n \log^* n)$ time.
\end{lemma}

The idea of \emph{supershortcuts} provide a way to shortcut a path to constant diameter.
This is a common subroutine for more general shortcutting algorithms.

\begin{corollary}
    \label{lem:supershortcut}
    For any DAG $G=(V,E)$ and any set of vertex-disjoint chains $\chains$ in $G$, there is a set of edges $H$ s.t. $\setsize{H} = \BigO(n \log^* n)$ and for any $C \in \chains$ and $u,v \in C$, we have $\dist_{G \cup H}(u,v) \le 4$.
    Write $\supershortcut(G)$ for such a set $H$.
    Such a set can be computed in $\BigO(H)$ time.
\end{corollary}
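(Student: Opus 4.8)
The plan is to reduce to \cref{lem:path-shortcut} one chain at a time and take a union. Fix a chain $C = (c_1, c_2, \dots, c_k) \in \chains$. Since $C$ is a path in $\tc(G)$, transitivity gives $(c_i, c_j) \in \tc(G)$ for every $i < j$; in particular the $k-1$ ``consecutive'' edges $(c_i, c_{i+1})$ are legal shortcuts. I would first add all of these consecutive edges to $H$, so that on the vertex set of $C$ the graph $G \cup H$ contains a genuine directed path $\tilde C$ isomorphic to $P_k$. I would then apply \cref{lem:path-shortcut} to $P_k$ to obtain a shortcut set $H_C$ with $\setsize{H_C} = \BigO(k \log^* k)$, computable in $\BigO(k\log^* k)$ time, such that $\dist_{\tilde C \cup H_C}(c_i, c_j) \le 4$ for all $i \le j$. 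Every edge of $H_C$ connects two vertices of $C$ and, oriented along the chain, has the form $(c_i, c_j)$ with $i < j$; hence it too lies in $\tc(G)$ and is legal to add to $H$.

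Taking $H$ to be the union over $C \in \chains$ of $H_C$ together with $C$'s consecutive edges, vertex-disjointness of the chains gives $\sum_{C\in\chains}\setsize{C} \le n$, so
\[
\setsize{H} \;\le\; \sum_{C\in\chains}\bigl(\setsize{C} + \BigO(\setsize{C}\log^* \setsize{C})\bigr) \;=\; \BigO(n\log^* n),
\]
and the construction time is dominated by the $\BigO(n\log^* n)$ spent over all calls to \cref{lem:path-shortcut} (the consecutive edges cost only $\BigO(n)$ in total).

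For correctness, fix $C \in \chains$ and $u = c_i,\, v = c_j \in C$. Every edge we added lies in $\tc(G)$, so $G \cup H$ remains acyclic; hence if $j < i$ the pair is unreachable and the bound holds vacuously, and we may assume $i \le j$. Restricting $G \cup H$ to $\nodeset{C}$ yields a supergraph of $\tilde C \cup H_C$, and since adding edges only decreases distances, $\dist_{G\cup H}(u,v) \le \dist_{\tilde C \cup H_C}(c_i, c_j) \le 4$.

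I do not anticipate a genuine obstacle: this is a direct ``solve each piece, then union'' argument. The two points that need a line of care are that chains live in $\tc(G)$ rather than in $G$ — so one must invoke transitivity both to justify adding the $H_C$-edges and to justify materializing the abstract path $P_k$ via the consecutive edges — and that it is precisely the vertex-disjointness of $\chains$ that bounds $\sum_C \setsize{C}$ by $n$ and thereby keeps $\setsize{H}$ (and the runtime) at $\BigO(n\log^* n)$ rather than growing with $\setsize{\chains}$ times the chain length.
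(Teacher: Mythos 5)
Your proof is correct and follows essentially the same route as the paper: materialize each chain with its consecutive transitive-closure edges, apply \cref{lem:path-shortcut} per chain, and union, using vertex-disjointness to bound the total size and time by $\BigO(n\log^* n)$. The paper's own proof is just a terser version of exactly this argument.
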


\begin{proof}
    For any chain $C = v_1, \dots, v_k$, include the edges $(v_i, v_{i+1})$ for $i \in [k-1]$ in $H$. 
    Now apply \Cref{lem:path-shortcut} to every chain and add the union of the obtained shortcut sets to $H$.
    This $H$ has the claimed properties.
\end{proof}

\subsubsection{$\ell$-Cover and $\ell$-Chain Covers.}
Kogan and Parter define the notion of an $\ell$-cover.
This structure is crucial to their approach.

\begin{definition}[$\ell$-Cover; Definition 2.1 of \cite{kogan_beating_2022}]
    For a given $n$-vertex directed graph $G = (V, E)$ and an integer parameter $\ell \in [n]$, an \emph{$\ell$-cover} for $G$ is a multiset of $\ell$ paths $\mathcal P = {P_1, \dots , P_\ell}$ (possibly consisting of single vertices or empty), satisfying the following: (1) $\sum_{P \in \mathcal P} \setsize{P} \le \min(\ell \cdot n, \diam(G) \cdot n)$, and (2) for any path $P \subseteq G$, it holds that $\setsize{V(P) \cap (V\setminus V(\mathcal P))} \le 2 n/ \ell$.
\end{definition}

For algorithmic reasons, we need the following derived notion.
Intuitively, in an $\ell$-cover vertices that are on more than one path only provide benefit (i.e., by covering a part of the graph) once, therefore we're interested in a vertex-disjoint version of the structure.
This yields a set of chains (instead of a multiset of paths).
Note this is a formalization of the output of the \texttt{DisjointChains} algorithm from \cite{kogan_beating_2022}.

\begin{definition}[$\ell$-Chain Cover]
\label{def:chain-cover}
    Let an \emph{$\ell$-chain cover} be a set of at most $\ell$ vertex-disjoint chains $\chains$ in $G$ such that for any path $P \subseteq G$, it holds that $\setsize{V(P) \cap (V\setminus V(\mathcal C))} \le 2 n/ \ell$.
\end{definition}

Plugging together two recent results, we can compute this object in almost-linear time.

\begin{lemma}
    \label{lem:fast-ell-cover}
    We can compute an $\ell$-chain cover $\chains$ in $\BigO(m^{1+o(1)})$ time.
\end{lemma}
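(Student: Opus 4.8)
The plan is to bypass Kogan and Parter's slow greedy construction entirely and obtain the $\ell$-chain cover from a single almost-linear-time flow computation. First I would reduce to the case that $G$ is a DAG, exactly as elsewhere in the paper: contract each strongly connected component and thread a star through each contracted vertex, which costs $\BigO(m)$ time, changes $n$ and $m$ by constant factors, and lets any $\ell$-chain cover of the contracted DAG be lifted back to $G$. Recall (\cref{def:chain-cover}) that we then need a set of at most $\ell$ vertex-disjoint chains $\chains$ such that every path $P \subseteq G$ satisfies $\setsize{V(P)\cap(V\setminus V(\chains))}\le 2n/\ell$. There are two routes: (i) first compute an $\ell$-cover — a family of $\le \ell$ paths with the analogous covering property — in $m^{1+o(1)}$ time, and then run the conversion of Lemma~1.15 of \cite{kogan_beating_2022} (the analysis of \texttt{DisjointChains}), whose only expensive ingredient is a de-duplication step implementable by a single unit-capacity flow computation; or (ii) compute the vertex-disjoint chains directly. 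Both routes reduce to the flow machinery of \cite{caceres_minimum_2023} run on top of the deterministic almost-linear-time min-cost flow algorithm of \cite{fast-deterministic-flow}.

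Concretely, I would set up the standard vertex-splitting network: split each $v \in V$ into an arc $v_{\mathrm{in}}\to v_{\mathrm{out}}$ of capacity $1$ and cost $-1$ (rewarding coverage), put a capacity-$1$, cost-$0$ arc $u_{\mathrm{out}}\to v_{\mathrm{in}}$ for each edge $(u,v)$ of $G$ (for route (ii), for each pair $(u,v)$ in the transitive closure, handled \emph{implicitly} as in \cite{caceres_minimum_2023} so that the network is never materialized), and attach a source and sink by capacity-$\ell$, cost-$0$ arcs to all $v_{\mathrm{in}}$ and from all $v_{\mathrm{out}}$. A minimum-cost integral flow of value $\ell$ — or of maximum value, if the network admits no flow of value $\ell$, in which case the resulting chains already cover all of $V$ and we are done — decomposes in $\BigO(m)$ time by flow decomposition into $\le \ell$ vertex-disjoint paths (chains) whose total number of vertices is maximum over all such families. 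Since the (implicit) network has $m^{1+o(1)}$ effective size, \cite{fast-deterministic-flow} computes this flow in $m^{1+o(1)}$ time; adding the $\BigO(m)$ for SCC-contraction and flow-decomposition and the $\Oish(m)$ for the Lemma~1.15 de-duplication (route (i)), the total is $m^{1+o(1)}$.

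It remains to verify the covering axiom: for the computed family $\chains$, every path $P\subseteq G$ leaves at most $2n/\ell$ vertices uncovered. This is the main obstacle. Since the total coverage is at most $n$, the average chain has $\le n/\ell$ vertices, and the intended argument mirrors the $\ell$-cover analysis of \cite{kogan_beating_2022}: if some path $P$ left more than $2n/\ell$ vertices uncovered one should be able to re-route the flow so as to strictly increase the total coverage, contradicting optimality, with the factor $2$ absorbing the case where the uncovered vertices of $P$ are split into several runs separated by covered vertices. The delicate point is that \cite{kogan_beating_2022} establish this for their \emph{specific} greedily-constructed family, not for an arbitrary maximum-coverage one, so the exchange argument has to be redone with care — most likely it is cleanest to argue via the complementary-slackness / vertex-potential optimality certificate produced alongside the min-cost flow rather than by a naive path swap. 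A secondary point to check is that the implicit transitive-closure construction of \cite{caceres_minimum_2023} indeed supports the bounded-value min-cost (rather than only maximum-value) flow variant within $m^{1+o(1)}$ time; if not, route (i) via an $\ell$-cover of ordinary paths plus Lemma~1.15 is the fallback.
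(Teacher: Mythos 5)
Your proposal follows essentially the same route as the paper: its proof of \cref{lem:fast-ell-cover} is precisely the black-box combination you describe as route (i), namely running Kogan--Parter's flow-based \texttt{PathCover} \cite{kogan_beating_2022} with the deterministic almost-linear-time min-cost flow algorithm of \cite{fast-deterministic-flow} as the flow subroutine and applying C\'aceres's technique \cite{caceres_minimum_2023} to extract the vertex-disjoint chain decomposition from the resulting flow. The covering-property worry you raise for your direct route (ii) is not an issue the paper needs to confront, since it inherits the $\ell$-cover analysis of \cite{kogan_beating_2022} unchanged (only the subroutines are swapped); and in any case the standard exchange argument --- some path of the current family loses at most $n/\ell$ covered vertices upon removal, so swapping it for a path with more than $2n/\ell$ uncovered vertices would strictly increase coverage --- applies to any maximum-coverage family, so that route could be completed as well.
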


\begin{proof}
    Applying the algorithm of C\'aceres \cite{caceres_minimum_2023} to extract a vertex-disjoint chain decomposition from the flow (which can be calculated using the algorithm from \cite{fast-deterministic-flow}) in the algorithm \texttt{PathCover} from \cite{kogan_beating_2022}, we directly obtain the desired result.
\end{proof}

For a chain $C$ and a node $u$, we often consider the first node $v$ on $C$ that is reachable from $u$.
We write $e(u,C) = (u,v)$ for this pair.
Note that $e(u,C)$ is in the transitive closure of the graph.

\subsection{Main Proof}

We now show Theorem \ref{thm:chain-greedy}.
We will recap some relevant prior work first.

\subsubsection{Recap: Kogan-Parter's $n^{1/3}$ Hopbound \cite{kogan_beating_2022}.}
Kogan and Parter introduced the first algorithm
(\texttt{Faster\allowbreak{}Shortcut\allowbreak{}Small\allowbreak{}Diam} in \cite{KP22a,kogan_beating_2022})
for finding shortcut sets of size $\Oish(n)$ for target hopbound $D = \Theta(n^{1/3})$.
Without loss of generality, the graph can be assumed to be a DAG (otherwise we can add a star on each strongly connected component and contract them).
Their algorithm proceeds as follows:
\begin{enumerate}
    \item 
Compute an $\ell$-chain cover $\chains$ for $\ell = 2n/D$ (see \cref{def:chain-cover}),
\item Super-shortcut every chain in $\chains$ (see \Cref{lem:supershortcut}),
\item Sample $\BigO(n \log n/D^2)$ chains $\chains'$ from $\chains$,
\item Sample $\BigO(n \log n/D)$ nodes $V'$ from $V$, and
\item Add a shortcut edge from every node in $V'$ to the earliest reachable node on every chain from $\chains'$.
\end{enumerate}
The $\ell$-chain cover ensures that \emph{any} path will have at most $D$ nodes not covered by chains. After super-shortcutting the chains, a path's length is thus roughly determined by the number of distinct chains it traverses. The sampling steps ensure that each sufficiently long path will, with high probability, be hit by a node in the $\BigO(D)$-length prefix and a chain in the $\BigO(D)$-length suffix, thereby reducing the diameter to $\BigO(D)$.

In this section, we show that steps 3-5 can be replaced by running a modified greedy algorithm that reduces a potential based on node-chain distance pairs.

\subsubsection{Running Greedy on Chain Covers.}
Our modified algorithm is based on an $\ell$-chain cover (with $\ell = 2n^{2/3}$) that we super-shortcut, but instead of sampling some chains and nodes and connecting them maximally, we use a greedy approach to connect the chains among each other.

\begin{figure}
    \centering
    \includegraphics[page=1]{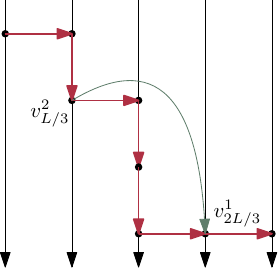}
    \caption{The (red) \emph{valid} path $P$ passes through $L$ chains in the $\ell$-chain cover. The shortcut edge between $v^{2}_{L/3}$ and $v^{1}_{2L/3}$ reduces the normalized distance between the first $L/3$ vertices on the path to the last $L/3$ chains on the path by $L/3$.}
    \label{fig:greedy-chains}
\end{figure}

Starting with an $\ell$-cover allows us to focus on only $n^{5/3}$ ``important'' pairs $(u,v)$ in the transitive closure, instead of all $n^{2}$ many. Let
\[S \coloneqq \{e(v,C) \mid v \in V, C \in \chains\}\]
be the set of important pairs between which we aim to shortcut. Recall that $e(v,C)$ is the edge $(v,w)$ in the transitive closure where $w$ is the earliest node on $C$ that $v$ can reach.
We call a $(u,v)$-path $P$ \emph{valid}
if for any chain $C\in \chains$ with $P\cap C\neq \emptyset$:
\begin{itemize}
\item $P\cap C$ forms a contiguous segment on the path $P$, and
\item if $w$ is the earliest node on $C$  that $u$ can reach in $G$,  the segment $P\cap C$ must start at $w$.
\end{itemize}
We consider the \emph{normalized distance} as the minimum number of chains that any \emph{valid} path between two points $(a,b) \in S$ must pass through (we denote this distance measure as $d'(a,b)$).
Recall that all chains are super-shortcut and thus any path contains at most $n^{1/3}$ nodes that are on no chain. Thus, if we reduce the \emph{normalized distance} between any pair in $S$ to $\BigO(n^{1/3})$, then the (usual) distance between any two nodes would also be $\BigO(n^{1/3})$.

We now define our modified potential $\phi'$ as
\[
    \phi' \coloneqq \sum_{(a,b) \in S} d'(a,b).
\]
Our modified greedy algorithm (see \Cref{alg:chain-greedy}) then chooses, as before, the edge (in the transitive closure) to add to the graph that reduces the potential $\phi'$ the most.
As in the warm-up for $\sqrt{n}$ hopbound, we will see that this reduces the potential by $\Omega(L^3)$ if $L$ is the length of the currently longest shortest path $P$, with respect to the \emph{normalized distance} $d'$. Indeed, say $P$ is such a path, and  assume it is passing though $L$ chains, where $v^1_i$ and $v^2_i$ are the first and last node on the $i$-th chain.
Then the edge $v^2_{L / 3} \to v^1_{2L/ 3}$ reduces the potential by $\Omega(L^3)$, see also \cref{fig:greedy-chains}.
See the appendix for a proof of the following lemma.

\begin{algorithm}[tbhp]
    \DontPrintSemicolon
    \SetKwFunction{FMain}{ChainCoverGreedy}
    \SetKwProg{Fn}{fun}{:}{}

    1. Compute an $\ell$-chain cover $\chains$ using \Cref{lem:fast-ell-cover}, with $\ell = 2n^{2/3}$.
    
    2. Super-shortcut $\chains$ using \Cref{lem:supershortcut}. 

    3. Repeatedly add edges to the graph that maximizes the potential reduction of $\phi'$, until the normalized distance of any pair in $S$ is at most $n^{1/3}$.

\caption{Compute a greedy shortcut set based on an $\ell$-chain cover.
    \label{alg:chain-greedy}}
\end{algorithm}

\begin{appendixlemmarep}
    \label{lem:chain-potential} Let $G$
    be a DAG and $\chains$ a $2n^{2/3}$-chain cover and suppose that any $d'$-shortest valid path for a pair in $S$ passes through at most $L$ chains. Then after adding $\BigO(n^{5/3}/L^2)$ greedy shortcut edges, any $d'$-shortest valid path for a pair in $S$ passes through at most $L / 2$ chains.
\end{appendixlemmarep}

\begin{appendixproof}
Since we have $n$ nodes, and $2n^{2/3}$ chains, we have that $|S|\le 2n^{5/3}$, and
    thus $\phi' \le 2n^{5/3} \cdot L$, where $L$ is the current longest shortest normalized distance of pairs in $S$.
    Observe that, by definition of \emph{valid} paths, any subpath of a $d'$-shortest valid path must also be a $d'$-shortest valid path.
    As in \cref{fig:greedy-chains}, the edge connecting $v^{2}_{L/3}$ to $v^{1}_{2L/3}$ would reduce the normalized distance of $\Omega(L^2)$ many pairs in $S$ (for each vertex on the prefix of the path, and each chain on the suffix of the path) by $\Omega(L)$. This relies on the fact that $v^{1}_{2L/3}$ is the earliest node on it's chain that is reachable from $v^{2}_{L/3}$, so shortcutting the valid path with this shortcut edge would also make a valid path.
    In particular, adding a greedy shortcut reduces the potential by at least $(L / (2 \cdot 3))^3 =  L^3 / 216$ as long as the longest shortest normalized distance is $\ge L/2$.
    Thus, the length of the longest $d'$-shortest path for a pair in $S$ must fall below $L / 2$ after adding at most $\frac{2n^{5/3}L}{L^3/256} = \BigO(n^{5/3}/L^2)$ edges.
\end{appendixproof}

This lemma in hand, the rest of the analysis is analogous to the warm-up and we get the following result.
\thmChainGreedy*


\begin{appendixproof}

First we argue that if the normalized distance of any pair in $S$ is at most $n^{1/3}$, then the (usual) distance of any two nodes $s$ and $t$ (where $s$ can reach $t$) is at most $\BigO(n^{1/3})$. Let $P$ be a shortest $(s,t)$-path. There are at most $n^{1/3}$ nodes on $P$ not on our chain cover. Thus, if the length is more than $n^{1/3}$, there must be some vertex $t'$ on our path that is also on a chain $C\in \chains$ such that $\dist(t', t)\le n^{1/3}$.
Since $e(s,C)$ is a pair in $S$, its normalized distance is at most $n^{1/3}$. This means that there is a (valid) path $P'$ between $s$ and $t$ that touches at most $n^{1/3}$ chains, and by the super-shortcutting we can assume this path uses at most $4$ edges inside each chain. In total, the $(s,t)$-distance is thus $\BigO(n^{1/3})$.

Now, we compute the number of shortcut edges used by \Cref{alg:chain-greedy}. The super-shortcutting step uses a total of $\BigO(n\log^* n)$ edges. The greedy potential reduction step uses, by \cref{lem:chain-potential}, $\BigO(n^{5/3}/L^{2})$ edges to half the maximum normalized distance from $L$ to $L/2$. We repeat until $L \le n^{1/3}$. By a geometric sum, this uses a total of 
\begin{equation*}
\sum_{L \in \{ n, \frac{n}{2}, \frac{n}{4}, \ldots, n^{1/3}\}}
\BigO(n^{5/3}/L^2) = 
\BigO(n^{5/3}/(n^{1/3})^2) = 
\BigO(n)
\end{equation*}
shortcut edges.
\end{appendixproof}


\bibliographystyle{plainurl}
\bibliography{papers.bib}

\appendix


\end{document}